\documentclass[twocolumn,
showpacs,preprintnumbers,amsmath,amssymb]{revtex4}

\usepackage{amssymb,amsmath,amsthm}
\usepackage{epsfig}
\usepackage{multirow}
\newtheorem{Thm}{Theorem}

\newtheorem{Lem}[Thm]{Lemma}
\newtheorem{Prop}{Proposition}

\theoremstyle{definition}

\newcommand{\bra}[1]{{\left\langle #1 \right|}}
\newcommand{\ket}[1]{{\left| #1 \right\rangle}}

\newcommand{\T}{\mbox{$\mathrm{tr}$}}

\begin{document}
\title{Strong monogamy of quantum entanglement for multi-qubit W-class states}

\author{Jeong San Kim}
\email{freddie1@suwon.ac.kr} \affiliation{
 Department of Mathematics, University of Suwon, Kyungki-do 445-743, Korea
}
\date{\today}

\begin{abstract}
We provide a strong evidence for strong monogamy inequality of multi-qubit entanglement recently proposed in
[B.  Regula {\em et al.}, Phys. Rev. Lett. {\bf 113}, 110501 (2014)].
We consider a large class of multi-qubit generalized W-class states, and analytically show that the strong monogamy inequality
of multi-qubit entanglement is saturated by this class of states.
\end{abstract}

\pacs{
03.67.Mn,  
03.65.Ud 
}
\maketitle

\section{Introduction}
\label{Intro}

Whereas classical correlation can be freely shared among parties in
multi-party systems, quantum entanglement is restricted in its
shareability; if a pair of parties are maximally entangled in
multipartite systems, they cannot have any entanglement~\cite{CKW, OV}
nor classical correlations~\cite{KW} with the rest of the system.
This restriction of entanglement shareability among multi-party systems is known as
the {\em monogamy of entanglement}~(MoE)~\cite{T04}.

MoE is at the heart of many quantum information and communication
protocols. For example, MoE is a key ingredient to make quantum
cryptography secure because it quantifies how much information
an eavesdropper could potentially obtain about the secret key to be
extracted~\cite{rg}. MoE also plays an important role
in condensed-matter physics such as
the frustration effects observed in Heisenberg antiferromagnets and
the $N$-representability problem for
fermions~\cite{anti}.

The first mathematical characterization of MoE was established by
Coffman-Kundu-Wootters(CKW) for three-qubit systems~\cite{CKW} as an inequality; for a three-qubit
pure state $\ket{\psi}_{ABC}$
with its one-qubit and two-qubit reduced density matrices
$\rho_A=\T_{BC}\ket{\psi}_{ABC}\bra{\psi}$,
$\T_C\ket{\psi}_{ABC}\bra{\psi}=\rho_{AB}$ and
$\T_B\ket{\psi}_{ABC}\bra{\psi}=\rho_{AC}$ respectively,
\begin{equation}
\tau\left(\ket{\psi}_{A|BC}\right) \geq \tau\left(\rho_{A|B}\right)+\tau\left(\rho_{A|C}\right),
\label{eq: CKW}
\end{equation}
where  $\tau\left(\ket{\psi}_{A|BC}\right)$
is the {\em tangle} of the pure state $\ket{\psi}_{ABC}$ quantifying the bipartite entanglement between $A$ and $BC$,
and $\tau\left(\rho_{A|B}\right)$ and $\tau\left(\rho_{A|C}\right)$ are the tangles
of the two-qubit reduced states $\rho_{AB}=\T_{C}\ket{\psi}_{ABC}\bra{\psi}$ and $\rho_{AC}=\T_{B}\ket{\psi}_{ABC}\bra{\psi}$,
respectively.

CKW inequality in~(\ref{eq: CKW}) shows the mutually exclusive nature of
three-qubit quantum entanglement in a quantitative way; more
entanglement shared between two qubits ($\tau\left(\rho_{A|B}\right)$) necessarily
implies less entanglement between the other two qubits ($\tau\left(\rho_{A|C}\right)$)
so that the summation does not exceed the total entanglement ($\tau\left(\ket{\psi}_{A|BC}\right)$).
Moreover, the residual entanglement from the difference between left and right-hand sides of Inequality~(\ref{eq: CKW})
is also interpreted as the genuine three-party entanglement, {\em three tangle} of $\ket{\psi}_{ABC}$
\begin{equation}
\tau\left(\ket{\psi}_{A|B|C}\right)=\tau\left(\ket{\psi}_{A|BC}\right)-\tau\left(\rho_{A|B}\right)-\tau\left(\rho_{A|C}\right),
\label{3tangle}
\end{equation}
which is invariant under the permutation of subsystems $A$, $B$ and $C$.
In this sense, $\tau\left(\ket{\psi}_{A|BC}\right)$ and $\tau\left(\rho_{A|B}\right)$ are also referred to
as the one tangle and two tangle, respectively~\cite{12tangle}.

Later, CKW inequality was generalized for multi-qubit systems~\cite{OV} and some cases of higher-dimensional
quantum systems in terms of various entanglement measures~\cite{KDS, KSRenyi, KT, KSU}. For general monogamy
inequality of multi-party entanglement, it was shown that
squashed entanglement~\cite{CW04} is a faithful entanglement measure
showing MoE of arbitrary quantum systems~\cite{BCY10}.

Recently, the three-tangle in Eq.~(\ref{3tangle})
was systematically generalized for arbitrary $n$-qubit quantum states, namely residual {\em $n$-tangle}~\cite{LA}.
Based on this generalization, the concept of {\em strong monogamy}(SM) inequality of multi-qubit entanglement
was proposed by conjecturing the nonnegativity of the  $n$-tangle~\cite{LA}. For the validity of SM inequality, an extensive numerical evidence was presented
for four qubit systems together with analytical proof for some cases of multi-qubit systems.
However, proving SM conjecture analytically for arbitrary multi-qubit states seems to be a formidable challenge due to the numerous optimization
processes arising in the definition of $n$-tangle.

Here we provide a strong evidence for SM inequality of multi-qubit entanglement; we consider a
large class of multi-qubit states, {\em generalized W-class states}, and analytically show that SM inequality proposed in~\cite{LA}
is saturated by this class of states. Because multi-qubit CKW inequality is known to be saturated by this generalized W-class states~\cite{Kim08},
this class of states are good candidates as possible counterexamples
for stronger version of monogamy inequalities.

The paper is organized as follows. In Sec.~\ref{Sec: Wstate}, we review the definition of generalized W-class states
for multi-qubit systems and provide some useful properties of this class in accordance with CKW inequality.
In Sec.~\ref{Subsec: strong}, we recall the concept of $n$-tangle as well as SM inequality of multi-qubit entanglement,
and show that multi-qubit SM inequality is saturated by generalized W-class states in Sec.~\ref{Subsec: SM W}.
In Sec.~\ref{Sec: Conclusion},  we summarize our results.

\section{Multi-qubit CKW inequality and the generalized W-class states}
\label{Sec: Wstate}

For $n$-qubit systems $\mathcal H_1 \otimes \cdots \otimes \mathcal H_n$ where $\mathcal H_j \cong \mathbb{C}^2$ for $j=1,\ldots,n$
and any $n$-qubit state $\ket{\psi}_{A_1 A_2 ... A_n} \in \mathcal H_1 \otimes \cdots \otimes \mathcal H_n$,
the three-qubit CKW inequality in (\ref{eq: CKW}) can be generalized as~\cite{OV}
\begin{equation}
\tau\left(\ket{\psi}_{A_1|A_2\cdots A_n}\right) \geq \sum_{j=2}^{n}\tau\left(\rho_{A_1|A_j}\right),
\label{eq: OV}
\end{equation}
where $\tau\left(\ket{\psi}_{A_1|A_2\cdots A_n}\right)$ is the tangle (or one tangle) of the pure state $\ket{\psi}_{A_1A_2\cdots A_n}$
with respect to the bipartition between $A_1$ and the other qubits $A_2\cdots A_n$
\begin{equation}
\tau\left(\ket{\psi}_{A_1|A_2\cdots A_n}\right)=4\det \rho_A,
\label{1tangle}
\end{equation}
and $\tau\left(\rho_{A_1|A_j}\right)$ is the tangle (or two tangle) of the two-qubit reduced density matrix $\rho_{A_1A_j}$
defined by convex-roof extension
\begin{equation}
\tau\left(\rho_{A_1|A_j}\right)=\bigg[\min\sum_h p_h \sqrt{\tau(\ket{\psi_h}_{A_1A_j})}\bigg]^2,
\label{2tangle}
\end{equation}
with the minimization taken over all possible pure state decompositions
\begin{equation}
\rho_{A_1A_j}=\sum_{h}p_{h}\ket{\psi_h}_{A_1A_j}\bra{\psi_h},
\label{decomp}
\end{equation}
for each $j=2,\cdots ,n$.

The $n$-qubit the generalized W-class state is defined as
\begin{align}
\ket{\psi}_{A_1 A_2 ... A_n}=&a\ket{00\cdots0}+b_1 \ket{10\cdots0}+b_2 \ket{01\cdots0}\nonumber\\
&+...+b_n \ket{00\cdots1}
\label{supWV}
\end{align}
with $|a|^2+\sum_{i=1}^{n}|b_j|^2 =1$~\cite{Kim08, GW}.
The term ``{\em generalized}'' naturally arises because Eq.~(\ref{supWV}) includes $n$-qubit W states as a special case
when $a=0$ and $b_j=1/\sqrt{n}$ for all $j$.

Before we further investigate strongly monogamous property of entanglement for the generalized W-class state in Eq.~(\ref{supWV}),
we recall a very useful property of quantum states proposed by Hughston-Jozsa-Wootters (HJW), which shows the unitary freedom
in the ensemble for density matrices~\cite{HJW}.
\begin{Prop} (HJW theorem)
The sets $\{|\tilde{\phi_i}\rangle\}$ and $\{|\tilde{\psi_j}\rangle\}$ of (possibly unnormalized) states generate the same density matrix
if and only if
\begin{equation}
|\tilde{\phi_i}\rangle=\sum_j u_{ij}|\tilde{\psi_j}\rangle\
\label{HJWeq}
\end{equation}
where $(u_{ij})$ is a unitary matrix of complex numbers, with indices $i$ and $j$, and we
{\em pad} whichever set of states $\{|\tilde{\phi_i}\rangle\}$ or $\{|\tilde{\psi_j}\rangle\}$ is smaller with additional zero vectors
so that the two sets have the same number of elements.
\label{HJWthm}
\end{Prop}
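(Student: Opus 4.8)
The plan is to prove the two directions separately, anchoring both to the spectral decomposition of the common density matrix $\rho=\sum_k\lambda_k\ket{e_k}\bra{e_k}$, where $\{\ket{e_k}\}$ are orthonormal eigenvectors and $\lambda_k\geq0$. Writing $\ket{\tilde e_k}=\sqrt{\lambda_k}\ket{e_k}$, the eigenensemble $\{\ket{\tilde e_k}\}$ serves as a canonical reference to which every other decomposition is compared; once each ensemble is shown to be a unitary image of this reference, composing the two unitaries yields the desired relation.

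The ``if'' direction is a direct computation. Assuming $\ket{\tilde\phi_i}=\sum_j u_{ij}\ket{\tilde\psi_j}$ with $(u_{ij})$ unitary, I would substitute into $\sum_i\ket{\tilde\phi_i}\bra{\tilde\phi_i}$ and collect terms, so that the coefficient of $\ket{\tilde\psi_j}\bra{\tilde\psi_k}$ becomes $\sum_i u_{ij}u_{ik}^{*}=(U^\dagger U)_{kj}=\delta_{jk}$. The off-diagonal cross terms cancel and one recovers $\sum_j\ket{\tilde\psi_j}\bra{\tilde\psi_j}$, so the two ensembles generate the same $\rho$.

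For the ``only if'' direction, the key lemma is that every member of an arbitrary decomposition $\rho=\sum_i\ket{\tilde\phi_i}\bra{\tilde\phi_i}$ lies in the support of $\rho$: for any $\ket{v}$ in the kernel, $\bra{v}\rho\ket{v}=\sum_i|\inn{v}{\tilde\phi_i}|^2=0$ forces $\inn{v}{\tilde\phi_i}=0$ for all $i$. Hence I can expand $\ket{\tilde\phi_i}=\sum_k c_{ik}\ket{\tilde e_k}$ over the nonvanishing eigenvectors. Substituting back and comparing coefficients of $\ket{e_k}\bra{e_l}$ in the eigenbasis gives $\sqrt{\lambda_k\lambda_l}\sum_i c_{ik}c_{il}^{*}=\lambda_k\delta_{kl}$, so $\sum_i c_{ik}c_{il}^{*}=\delta_{kl}$ on the support; the matrix $(c_{ik})$ thus has orthonormal columns and extends to a unitary $U$ after padding. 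Applying the same argument to $\{\ket{\tilde\psi_j}\}$ produces a unitary $V$ with $\ket{\tilde\psi_j}=\sum_k v_{jk}\ket{\tilde e_k}$; inverting this relation and composing yields $\ket{\tilde\phi_i}=\sum_j(UV^\dagger)_{ij}\ket{\tilde\psi_j}$, with $UV^\dagger$ unitary.

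The main obstacle I anticipate is bookkeeping rather than conceptual: reconciling the differing cardinalities of the two ensembles with the rank of $\rho$, and checking that the isometry $(c_{ik})$ genuinely extends to a unitary once both ensembles are padded with zero vectors. Care is needed so that the padded columns, corresponding to vanishing eigenvalues and hence to $\ket{\tilde e_k}=0$, contribute nothing to the reconstruction of each $\ket{\tilde\phi_i}$, and so that the final $UV^\dagger$ acts on the common padded dimension shared by both sets.
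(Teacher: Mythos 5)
Your argument is correct, but note that the paper does not actually prove this Proposition: it is quoted as the known Hughston--Jozsa--Wootters theorem with a citation to the original reference, and is then only \emph{applied} to the rank-two states $\ket{\tilde{x}},\ket{\tilde{y}}$ in the lemmas that follow. What you have written is the standard spectral-decomposition proof of that theorem. The ``if'' direction is a clean computation ($\sum_i u_{ij}u_{ik}^{*}=\delta_{jk}$ collapses the cross terms), and the ``only if'' direction correctly isolates the two essential points: every ensemble vector lies in the support of $\rho$ (via $\bra{v}\rho\ket{v}=0$ on the kernel), and the resulting coefficient matrix against the eigenensemble $\ket{\tilde e_k}=\sqrt{\lambda_k}\ket{e_k}$ is an isometry ($\sum_i c_{ik}c_{il}^{*}=\delta_{kl}$, using $\lambda_k,\lambda_l>0$ on the support) that completes to a unitary; composing $UV^{\dagger}$ then relates any two ensembles. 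The padding issue you flag is handled exactly as you describe --- both ensembles are enlarged by zero vectors to a common size $N\ge\mathrm{rank}\,\rho$, the zero rows do not disturb the isometry condition, and the extra columns multiply $\ket{\tilde e_k}=0$. So the proposal is a sound, self-contained proof of a statement the paper takes as given.
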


A direct consequence of Proposition~\ref{HJWthm} is the following; for two pure-state decompositions
$\sum_{i}p_{i}\ket{\phi_i}\bra{\phi_i}$ and $\sum_{j}q_{j}\ket{\psi_j}\bra{\psi_j}$,
they represent the same density matrix, that is $\rho=\sum_{i}p_{i}\ket{\phi_i}\bra{\phi_i}=\sum_{j}q_{j}\ket{\psi_j}\bra{\psi_j}$
if and only if $\sqrt{p_{i}}\ket{\phi_i}=\sum_{j}u_{ij}\sqrt{q_{j}}\ket{\psi_j}$ for some unitary matrix $u_{ij}$.
Now we have the following lemma showing that multi-qubit monogamy inequality in terms of one and two tangles in (\ref{eq: OV}) is
saturated by the generalized W-class states in (\ref{supWV}).
\begin{Lem}
For any multi-qubit system, multi-qubit CKW inequality is saturated by generalized W-class states,
that is,
\begin{align}
\tau\left(\ket{\psi}_{A_1|A_2\cdots A_n}\right) = \sum_{j=2}^{n}\tau\left(\rho_{A_1|A_j}\right),
\label{satWV}
\end{align}
for any $n$-qubit state $\ket{\psi}_{A_1|A_2\cdots A_n}$ in Eq.~(\ref{supWV})
\label{Lem: satWV}
\end{Lem}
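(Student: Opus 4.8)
The plan is to evaluate both sides of (\ref{satWV}) in closed form and observe that they agree: the left-hand side comes from a single determinant, while each summand on the right comes from a convex-roof optimization that will turn out to be decomposition-independent. I begin with the one-tangle. Writing (\ref{supWV}) as $\ket{\psi}=\ket{0}_{A_1}\ket{\alpha}+\ket{1}_{A_1}\ket{\beta}$ with $\ket{\beta}=b_1\ket{0\cdots0}$ and $\ket{\alpha}=a\ket{0\cdots0}+\sum_{j=2}^{n}b_j\ket{e_{j-1}}$, where $\ket{e_k}$ is the $(n-1)$-qubit string carrying a single $1$ in slot $k$, I trace out $A_2\cdots A_n$ to obtain a $2\times2$ matrix $\rho_{A_1}$ whose determinant is $\det\rho_{A_1}=|b_1|^2\sum_{j=2}^{n}|b_j|^2$. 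By (\ref{1tangle}) this yields $\tau(\ket{\psi}_{A_1|A_2\cdots A_n})=4|b_1|^2\sum_{j=2}^{n}|b_j|^2$.

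Next I compute the two-qubit marginals. Fixing $j$, I regroup (\ref{supWV}) according to the $(A_1,A_j)$ register and trace out the remaining qubits; since at most one qubit of a W-class state is excited, no $\ket{11}$ term survives and I expect the rank-at-most-two form
\[
\rho_{A_1A_j}=\ket{\phi_j}\bra{\phi_j}+s_j^2\ket{00}\bra{00},
\]
with $\ket{\phi_j}=a\ket{00}+b_1\ket{10}+b_j\ket{01}$ and $s_j^2=\sum_{k\neq 1,j}|b_k|^2$. The decisive structural fact is that the range of $\rho_{A_1A_j}$ is spanned by $\ket{\phi_j}$ and $\ket{00}$.

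The core of the argument is the two-tangle, and this is the step I expect to carry the weight. By Proposition~\ref{HJWthm} (applied with the eigendecomposition of $\rho_{A_1A_j}$ as reference), every vector in any pure-state decomposition lies in the range, so it has the form $\sqrt{p_h}\ket{\chi_h}=\alpha_h\ket{\phi_j}+\beta_h\ket{00}$. For such an unnormalized two-qubit vector the concurrence $2|c_{00}c_{11}-c_{01}c_{10}|$ collapses, because $c_{11}=0$, to $2|\alpha_h|^2|b_1b_j|$, so that $p_h\sqrt{\tau(\ket{\chi_h})}=2|\alpha_h|^2|b_1b_j|$ depends only on $|\alpha_h|^2$. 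Expanding the decomposition constraint $\sum_h(\alpha_h\ket{\phi_j}+\beta_h\ket{00})(\alpha_h^*\bra{\phi_j}+\beta_h^*\bra{00})=\rho_{A_1A_j}$ in the four linearly independent operators $\ket{\phi_j}\bra{\phi_j}$, $\ket{\phi_j}\bra{00}$, $\ket{00}\bra{\phi_j}$, $\ket{00}\bra{00}$ and matching the coefficient of $\ket{\phi_j}\bra{\phi_j}$ forces $\sum_h|\alpha_h|^2=1$ for every admissible decomposition. Hence $\sum_h p_h\sqrt{\tau(\ket{\chi_h})}=2|b_1b_j|\sum_h|\alpha_h|^2=2|b_1b_j|$ is the same constant for all decompositions, the minimization in (\ref{2tangle}) is vacuous, and $\tau(\rho_{A_1|A_j})=4|b_1|^2|b_j|^2$.

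Summing over $j$ then gives $\sum_{j=2}^{n}\tau(\rho_{A_1|A_j})=4|b_1|^2\sum_{j=2}^{n}|b_j|^2$, which is exactly the one-tangle found above, establishing (\ref{satWV}). The only genuine subtlety is the claim that the optimization is trivial, which rests on the linear independence of the four operators above and hence on $\ket{\phi_j}$ and $\ket{00}$ being independent; I would therefore treat the degenerate cases separately — $b_1=0$, where both sides vanish, and $b_1=b_j=0$ with $s_j\neq0$, where $\rho_{A_1A_j}$ collapses to a rank-one product state — and verify that the formula $4|b_1|^2|b_j|^2$ persists in each.
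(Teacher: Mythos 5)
Your proposal is correct and follows essentially the same route as the paper: compute the one-tangle from $\det\rho_{A_1}$, exhibit each $\rho_{A_1A_j}$ as a rank-two mixture of the W-part and the vacuum, and use the HJW theorem to show every pure-state decomposition vector is a superposition of these two, so that $\sum_h p_h\sqrt{\tau(\ket{\chi_h})}=2|b_1||b_j|$ is decomposition-independent. The only (cosmetic) difference is that you obtain $\sum_h|\alpha_h|^2=1$ by matching operator coefficients (with the degenerate rank-one case treated separately), whereas the paper reads it off directly from the unitarity of the HJW matrix.
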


\begin{proof}
Let us first consider the one tangle of $\ket{\psi}_{A_1 A_2 ... A_n}$ with respect to the bipartition between $A_1$ and the other qubits.
The reduced density matrix $\rho_{A_1}$ of subsystem $A_1$ is
\begin{align}
\rho_{A_1}=&\T_{A_2\cdots A_n}\ket{\psi}_{A_1 A_2 ... A_n}\bra{\psi}\nonumber\\
=&\left(a\ket{0}+b_1\ket{1}\right)_{A_1}\left(a^*\bra{0}+{b_1}^*\bra{1}\right)
+\sum_{j=2}^{n}|b_j|^2\ket{0}_{A_1}\bra{0},
\label{rho_A_1}
\end{align}
thus
\begin{align}
\tau\left(\ket{\psi}_{A_1|A_2\cdots A_n}\right)=4\det\rho_{A_1}=4|b_1|^2\sum_{j=2}^{n}|b_j|^2.
\label{onet}
\end{align}

For each $j=2,3,\cdots , n$, the reduced density matrix $\rho_{A_1A_j}$ of two-qubit subsystem $A_1A_j$ is
\begin{widetext}
\begin{align}
\rho_{A_1A_j}=&\T_{A_2\cdots \widehat{A_j} \cdots A_n}\ket{\psi}_{A_1 A_2 ... A_n}\bra{\psi}\nonumber\\
=&\left(a\ket{00}+b_1\ket{10}+b_j\ket{01}\right)_{A_1A_j}\left(a^*\bra{00}+b_1^*\bra{10}+b_j^*\bra{01}\right)
+\sum_{k\neq j}|b_k|^2\ket{00}_{A_1A_j}\bra{00},
\label{rho1i}
\end{align}
\end{widetext}
where $A_2\cdots \widehat{A}_j\cdots A_n = A_2\cdots A_{j-1}A_{j+1}\cdots A_n$ for each $j=2,3,\cdots, n$.
Here, we consider two-qubit (possibly) unnormalized states
\begin{align}
\ket{\tilde{x}}_{A_1A_j}=&a\ket{00}_{A_1A_j}+b_1\ket{10}_{A_1A_j}+b_j\ket{01}_{A_1A_j}\nonumber\\
\ket{\tilde{y}}_{A_1A_j}=&\sqrt{\sum_{k\neq j}|b_k|^2}\ket{00}_{A_1A_j},
\label{HJW}
\end{align}
which represents $\rho_{A_1A_j}$ as
\begin{equation}
\rho_{A_1A_j}=\ket{\tilde{x}}_{A_1A_j}\bra{\tilde{x}}+\ket{\tilde{y}}_{A_1A_j}\bra{\tilde{y}}.
\label{rho1irep}
\end{equation}

From the HJW theorem in Proposition~\ref{HJWthm}, we note that for any pure state decomposition of
\begin{equation}
\rho_{A_1 A_j}=\sum_{h=1}^{r}|\tilde{\phi_h}\rangle_{A_1 A_j} \langle\tilde{\phi_h}|,
\label{decomp}
\end{equation}
where
$|\tilde{\phi_h}\rangle_{A_1 A_j}$ is an unnormalized state in two-qubit subsystem $A_1A_j$,
there exists an $r\times r$ unitary matrix $(u_{hl})$ such that
\begin{equation}
|\tilde{\phi_h}\rangle_{A_1A_j}=u_{h1}\ket{\tilde{x}}_{A_1 A_j}+u_{h2}\ket{\tilde{y}}_{A_1 A_j},
\label{HJWrelation}
\end{equation}
for each $h$.

By considering the normalization $\ket{\phi_h}_{A_1 A_j}=|\tilde{\phi}_h\rangle_{A_1 A_j}/\sqrt{p_h}$
with $ p_h =|\langle\tilde{\phi}_h|\tilde{\phi}_h\rangle|$, we have the tangle of each two-qubit pure state
$\ket{\phi_h}_{A_1 A_j}$ as
\begin{align}
\tau\left(\ket{\phi_h}_{A_1|A_j}\right)=4\det\rho_{A_1}^{h}=\frac{4}{p_h^2}|u_{hj}|^4|b_1|^2|b_j|^2,
\label{tauphi_h}
\end{align}
where $\rho_{A_1}^{h}=\T_{A_i}\ket{\phi_h}_{A_1A_i}\bra{\phi_h}$ is the reduced density matrix of $\ket{\phi_h}_{A_1A_j}$ on subsystem $A_1$ for each $h$.
Moreover, the definition of two-tangle in Eq.~(\ref{2tangle}) together with Eq.~(\ref{tauphi_h}) lead us to
\begin{align}
\tau\left(\rho_{A_1A_j}\right)=&\bigg[\min_{\{p_h, \ket{\phi_h}\}}\sum_h p_h \sqrt{\tau\left(\ket{\phi_h}_{A_1|A_j}\right)}\bigg]^2\nonumber\\
=&\bigg[\min_{\{p_h, \ket{\phi_h}\}}\sum_h 2|u_{hj}|^2|b_1||b_j|\bigg]^2,\nonumber\\
=&4|b_1|^2|b_j|^2,
\label{2ti}
\end{align}
for each $j=2,\cdots ,n$.

Now Eqs.~(\ref{onet}) and (\ref{2ti}) implies Eq.~(\ref{satWV}), which completes the proof.
\end{proof}

For two tangle of two-qubit mixed state $\rho_{A_1A_j}$ in Eq.~(\ref{rho1i}), we need to deal with the minimization arising in the definition Eq.~(\ref{2tangle}).
In fact, any two-qubit mixed state can have an analytic entanglement measure called {\em concurrence}~\cite{WW}, whose analytic
evaluation can also be adapted for that of two tangle. However, the proof of Lemma~\ref{Lem: satWV} efficiently resolves this optimization problem by
considering all possible pure-state decompositions of $\rho_{A_1 A_j}$, which also shows a nice property of generalized W-class states;
the tangle of two-qubit reduced density matrix obtained from a generalized W-class state does not depend on
the choice of pure-state decomposition, $\rho_{A_1 A_j}=\sum_{h}{ p_h \ket{\phi_h}_{A_1 A_j}\bra{\phi_h}}$.
The following simple lemma shows another useful property about the structure of generalized W-class.
\begin{Lem}
Let $\ket{\psi}_{A_1\cdots A_n}$ be a generalized W-class state in Eq.~(\ref{supWV}).
For any $m$-qubit subsystems $A_1A_{j_1}\cdots A_{j_{m-1}}$ of $A_1\cdots A_n$ with $2 \leq m \leq  n-1$,
the reduced density matrix $\rho_{A_1A_{j_1}\cdots A_{j_{m-1}}}$ of $\ket{\psi}_{A_1\cdots A_n}$ is a mixture of a $m$-qubit generalized W-class state
and vacuum.
\label{reduced}
\end{Lem}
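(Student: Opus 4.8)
The plan is to exploit the defining feature of the generalized W-class state in Eq.~(\ref{supWV}): every computational-basis term carries at most one excitation (a single $\ket{1}$). Fix the retained set $S=\{1,j_1,\dots,j_{m-1}\}$ and write its complement in $\{2,\dots,n\}$ as $S^c$, the qubits to be traced out. First I would sort the $n+1$ terms of $\ket{\psi}_{A_1\cdots A_n}$ according to where the single excitation sits: the $a$-term and every $b$-term whose excitation lies in $S$ leave all of $S^c$ in $\ket{0}$, while the remaining $b$-terms place a lone excitation inside $S^c$ and leave all of $S$ in $\ket{0}$.

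This grouping yields the product decomposition
\begin{equation}
\ket{\psi}_{A_1\cdots A_n}=\ket{w}_{S}\otimes\ket{0\cdots0}_{S^c}+\ket{0\cdots0}_{S}\otimes\ket{v}_{S^c},
\label{plan:split}
\end{equation}
where
\begin{equation}
\ket{w}_{S}=a\ket{0\cdots0}+b_1\ket{10\cdots0}+\sum_{l=1}^{m-1}b_{j_l}\ket{\cdots 1_{j_l}\cdots}
\end{equation}
is an (unnormalized) $m$-qubit generalized W-class state on $S$, and $\ket{v}_{S^c}=\sum_{k\in S^c}b_k\ket{\cdots 1_k\cdots}$ collects the excitations sitting on the traced-out qubits. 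Next I would form $\ket{\psi}\bra{\psi}$, which splits into four blocks, and apply $\T_{S^c}$ to each.

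The key computation is the partial trace of the two diagonal blocks together with the cancellation of the two cross blocks. Since $\ket{0\cdots0}_{S^c}$ is normalized, the first diagonal block contributes $\ket{w}_{S}\bra{w}$; since $\ket{v}_{S^c}$ consists only of single-excitation strings, $\T_{S^c}\big(\ket{v}_{S^c}\bra{v}\big)=\inn{v}{v}_{S^c}=\sum_{k\in S^c}|b_k|^2$, which becomes the vacuum weight. The decisive point is that the cross blocks vanish, because each is proportional to $\T_{S^c}\big(\ket{0\cdots0}_{S^c}\bra{v}\big)=\inn{v}{0\cdots0}_{S^c}=0$, the all-zero string on $S^c$ being orthogonal to every single-excitation string there. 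Assembling the surviving pieces gives
\begin{equation}
\rho_{A_1A_{j_1}\cdots A_{j_{m-1}}}=\ket{w}_{S}\bra{w}+\Big(\sum_{k\in S^c}|b_k|^2\Big)\ket{0\cdots0}_{S}\bra{0\cdots0},
\label{plan:result}
\end{equation}
which is exactly a mixture of an $m$-qubit generalized W-class state and vacuum, as claimed; note that $\inn{w}{w}_S+\sum_{k\in S^c}|b_k|^2=1$, so the two weights are properly normalized.

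I do not expect a genuine obstacle here; the only thing demanding care is the bookkeeping of the tensor-factor ordering when the retained indices $j_1,\dots,j_{m-1}$ are interleaved with the traced-out ones, so that the ``excitation in $S$'' versus ``excitation in $S^c$'' split in Eq.~(\ref{plan:split}) is read off correctly. Once the product form Eq.~(\ref{plan:split}) is in hand, the orthogonality $\inn{v}{0\cdots0}_{S^c}=0$ does all the work, and no optimization or convex-roof argument is needed---this is why the lemma is ``simple'' and serves mainly as structural input for the strong-monogamy analysis to follow.
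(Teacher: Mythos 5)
Your proof is correct and is essentially the paper's own argument: the paper asserts the decomposition $\rho_{A_1A_{j_1}\cdots A_{j_{m-1}}}=\ket{\tilde{x}}\bra{\tilde{x}}+\ket{\tilde{y}}\bra{\tilde{y}}$ as ``a straightforward calculation,'' and your grouping of the terms by where the single excitation sits, with the cross blocks killed by $\inn{v}{0\cdots0}=0$ on the traced-out factor, is exactly that calculation made explicit. Your vacuum weight $\sum_{k\in S^c}|b_k|^2$ over the traced-out indices is the correct one; the paper's Eq.~(\ref{xym}) writes this sum over $k\in\{j_1,\ldots,j_{m-1}\}$, which appears to be a typo for the complementary index set, consistent with the two-qubit case in Eq.~(\ref{rho1i}).
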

\begin{proof}
By a straightforward calculation, we obtain
\begin{align}
\rho_{A_1A_{j_1}\cdots A_{j_{m-1}}}=&\ket{\tilde{x}}_{A_1A_{j_1}\cdots A_{j_{m-1}}}\bra{\tilde{x}}\nonumber\\
&+\ket{\tilde{y}}_{A_1A_{j_1}\cdots A_{j_{m-1}}}\bra{\tilde{y}},
\label{mrho}
\end{align}
where
\begin{widetext}
\begin{align}
\ket{\tilde{x}}_{A_1A_{j_1}\cdots A_{j_{m-1}}}=
&\left(a\ket{00\cdots0}+b_1\ket{10\cdots0}+b_{j_1}\ket{01\cdots0}+
\cdots+b_{j_{m-1}}\ket{00\cdots1}\right)_{A_1A_{j_1}\cdots A_{j_{m-1}}},\nonumber\\
\ket{\tilde{y}}_{A_1A_{j_1}\cdots A_{j_{m-1}}}=&\sqrt{\sum_{k\in \{j_1, j_2, \cdots,j_{m-1} \}}|b_k|^2}\ket{00\cdots 0}_{A_1A_{j_1}\cdots A_{j_{m-1}}}
\label{xym}
\end{align}
\end{widetext}
are the unnormalized states in $m$-qubit subsystems $A_1A_{j_1}\cdots A_{j_{m-1}}$.
By considering the normalized states $\ket{x}_{A_1A_{j_1}\cdots A_{j_{m-1}}}=\ket{\tilde{x}}_{A_1A_{j_1}\cdots A_{j_{m-1}}}/\sqrt{p}$ with
$p=\langle\tilde{x}|\tilde{x}\rangle$ and $\ket{y}_{A_1A_{j_1}\cdots A_{j_{m-1}}}=\ket{\tilde{y}}_{A_1A_{j_1}\cdots A_{j_{m-1}}}/\sqrt{q}$ with
$q=\langle\tilde{y}|\tilde{y}\rangle$, we note that
\begin{align}
\rho_{A_1A_{j_1}\cdots A_{j_{m-1}}}=&p\ket{x}_{A_1A_{j_1}\cdots A_{j_{m-1}}}\bra{x}\nonumber\\
&+q\ket{y}_{A_1A_{j_1}\cdots A_{j_{m-1}}}\bra{y},
\end{align}
where $\ket{x}$ is a generalized W-class state and $\ket{y}$ is the vacuum, which completes the proof.
\end{proof}

\section{Strong monogamy inequality for multi-qubit generalized W-class states}
\label{Sec: SM W}
\subsection{Strong monogamy of multi-qubit entanglement}
\label{Subsec: strong}

The definition of three tangle in Eq.~(\ref{3tangle}) was generalized for arbitrary $n$-qubit quantum states~\cite{LA};
for an $n$-qubit pure state $\ket{\psi}_{A_1A_2\cdots A_n}$,
its {\em $n$-tangle} is defined as
\begin{align}
\tau\left(\ket{\psi}_{A_1|A_2|\cdots |A_n}\right)
=&\tau\left(\ket{\psi}_{A_1|A_2\cdots A_n}\right)\nonumber\\
&-\sum_{m=2}^{n-1} \sum_{\vec{j}^m}\tau\left(\rho_{A_1|A_{j^m_1}|\cdots |A_{j^m_{m-1}}}\right)^{m/2},
\label{eq:ntanglepure}
\end{align}
where the index vector $\vec{j}^m=(j^m_1,\ldots,j^m_{m-1})$ spans all the ordered subsets of the index set $\{2,\ldots,n\}$ with $(m-1)$ distinct elements.
For each $m=2,\cdots, n-1$, the $m$-tangle for multi-qubit mixed state is defined by convex-roof extension,
\begin{widetext}
\begin{equation}
\tau\left(\rho_{A_1|A_{j^m_1}|\cdots |A_{j^m_{m-1}}}\right)=\bigg[\min_{\{p_h, \ket{\psi_h}\}}\sum_h p_h
\sqrt{\tau\left(\ket{\psi_h}_{A_1|A_{j^m_1}|\cdots |A_{j^m_{m-1}}}\right)}\bigg]^2,
\label{ntanglemix}
\end{equation}
\end{widetext}
where the minimization of over all possible pure state decompositions
\begin{equation}
\rho_{A_1A_{j^m_1}\cdots A_{j^m_{m-1}}}=\sum_{h}p_{h}\ket{\psi_h}_{A_1A_{j^m_1}\cdots A_{j^m_{m-1}}}\bra{\psi_h}.
\label{decomp}
\end{equation}
Eq.~(\ref{eq:ntanglepure}) is an recurrent definition, that is, all the $m$ tangles $\tau\left(\rho_{A_1|A_{j^m_1}|\cdots |A_{j^m_{m-1}}}\right)$
for $2 \leq m \leq n-1$ need to appear to define the $n$ tangle $\tau\left(\ket{\psi}_{A_1|A_2|\cdots |A_n}\right)$.
We further note that Eq.~(\ref{eq:ntanglepure}) reduces to the two and three tangles when $n=2$ and $n=3$ respectively.
Based on this generalization, strong monogamy of multi-qubit entanglement was proposed
by conjecturing the nonnegativity of $n$-tangle Eq.~(\ref{eq:ntanglepure}),
\begin{align}
\tau\left(\ket{\psi}_{A_1|A_2\cdots A_n}\right)\geq\sum_{m=2}^{n-1} \sum_{\vec{j}^m}\tau\left(\rho_{A_1|A_{j^m_1}|\cdots |A_{j^m_{m-1}}}\right)^{m/2}.
\label{eq:SM}
\end{align}

The term {\em strong} naturally arises because
Inequality~(\ref{eq:SM}) is in fact {\em finer} than the $n$-qubit CKW inequality in (\ref{eq: OV})
\begin{align}
\tau\left(\ket{\psi}_{A_1|A_2\cdots A_n}\right)\geq&\sum_{j=2}^{n}\tau\left(\rho_{A_1|A_j}\right)\nonumber\\
&+\sum_{m=3}^{n-1} \sum_{\vec{j}^m}\tau\left(\rho_{A_1|A_{j^m_1}|\cdots |A_{j^m_{m-1}}}\right)^{m/2}\nonumber\\
\geq &\sum_{j=2}^{n}\tau\left(\rho_{A_1|A_j}\right).
\label{compar}
\end{align}
Moreover, Inequality~(\ref{eq:SM}) also encapsulates three-qubit CKW inequality in (\ref{eq: CKW}) for $n=3$,
thus Inequality~(\ref{eq:SM}) can be considered as another generalization of three-qubit CKW inequality in a stronger form.

\subsection{SM inequality for W-class states}
\label{Subsec: SM W}

For the validity of SM inequality in (\ref{eq:SM}), an extensive numerical evidence was presented
for four qubit systems together with analytical proof for some cases of multi-qubit systems.
However, providing an analytical proof of Inequality~(\ref{eq:SM}) for arbitrary multi-qubit states seems to be a formidable challenge
because there are numerous optimization processes arising in the recurrent definition of $n$-tangle~(\ref{eq:ntanglepure}).
Here we show that SM inequality holds for generalized W-class states in arbitrary multi-qubit systems.
Because Lemma~\ref{Lem: satWV} shows the multi-qubit CKW inequality is saturated by
generalized W-class states~\cite{Kim08}, this class of states are good candidates for possible violation of
stronger inequality, that is, SM inequality.

For the validity of SM inequality for generalized W-class states,
we first note that Inequality (\ref{eq:SM}) must be saturated by this class of states because of
Lemma~\ref{Lem: satWV} together with Inequalities (\ref{compar}). Thus we will show the residual term
\begin{align}
\sum_{m=3}^{n-1} \sum_{\vec{j}^m}\tau\left(\rho_{A_1|A_{j^m_1}|\cdots |A_{j^m_{m-1}}}\right)^{m/2}
\label{ktangleres}
\end{align}
in (\ref{compar}) is zero for any $n$-qubit generalized W-class state $\ket{\psi}_{A_1 A_2 ... A_n}$.
By using the mathematical induction on $m$, we further show that all the
$m$ tangles for $3\leq m \leq n-1$ is zero for generalized W-class states, that is,
\begin{align}
\tau\left(\rho_{A_1|A_{j^m_1}|\cdots |A_{j^m_{m-1}}}\right)=0,
\label{ktanglezero}
\end{align}
for all the index vectors $\vec{j}^m=(j^m_1,\ldots,j^m_{m-1})$ with $3\leq m \leq n-1$.

For $m=3$ and any index vector $\vec{j}=(j_1, j_2)$ with $j_1,~j_2 \in \{2,3,\cdots,n\}$, the left-hand side of Eq.~(\ref{ktanglezero}) becomes the three-tangle of the three-qubit
subsystem ${A_1A_{j_1}A_{j_2}}$~\cite{omit} where Lemma~\ref{reduced} leads us to the three-qubit reduced density matrix as
\begin{equation}
\rho_{A_1A_{j_1}A_{j_2}}=\ket{\tilde{x}}_{A_1A_{j_1}A_{j_2}}\bra{\tilde{x}}+\ket{\tilde{y}}_{A_1A_{j_1}A_{j_2}}\bra{\tilde{y}},
\label{rho123rep}
\end{equation}
with the three-qubit unnormalized states
\begin{align}
\ket{\tilde{x}}_{A_1A_{j_1}A_{j_2}}=a&\ket{000}_{A_1A_{j_1}A_{j_2}}+b_1\ket{100}_{A_1A_{j_1}A_{j_2}}\nonumber\\
&+b_{j_1}\ket{010}_{A_1A_{j_1}A_{j_2}}+b_{j_2}\ket{001}_{A_1A_{j_1}A_{j_2}}\nonumber\\
\ket{\tilde{y}}_{A_1A_{j_1}A_{j_2}}&=\sqrt{\sum_{k\neq j_1, j_2}|b_k|^2}\ket{000}_{A_1A_{j_1}A_{j_2}}.
\label{xy2}
\end{align}

The HJW theorem in Proposition~\ref{HJWthm} assures that for any pure state decomposition of $\rho_{A_1A_{j_1}A_{j_2}}$,
\begin{equation}
\rho_{A_1A_{j_1}A_{j_2}}=\sum_{h=1}^{r}|\tilde{\phi_h}\rangle_{A_1A_{j_1}A_{j_2}} \langle\tilde{\phi_h}|,
\label{decomp123}
\end{equation}
where
$|\tilde{\phi_h}\rangle_{A_1A_{j_1}A_{j_2}}$ is an unnormalized state in three-qubit subsystem ${A_1A_{j_1}A_{j_2}}$,
there exists an $r\times r$ unitary matrix $(u_{hl})$ that makes a relation between pure state ensembles of $\rho_{A_1A_{j_1}A_{j_2}}$  as
\begin{equation}
|\tilde{\phi_h}\rangle_{A_1A_{j_1}A_{j_2}}=u_{h1}\ket{\tilde{x}}_{A_1A_{j_1}A_{j_2}}+u_{h2}\ket{\tilde{y}}_{A_1A_{j_1}A_{j_2}}.
\label{phih123}
\end{equation}
Here we note, for each $h$, $|\tilde{\phi_h}\rangle_{A_1A_{j_1}A_{j_2}}$ in Eq.~(\ref{phih123}) is a (unnormalized) superposition of a three-qubit W-class state
and vacuum. Thus Lemma~\ref{Lem: satWV} assures that the normalized state $\ket{\phi_h}_{A_1A_{j_1}A_{j_2}}=|\tilde{\phi}_h\rangle_{A_1A_{j_1}A_{j_2}}/\sqrt{p_h}$
with $ p_h =|\langle\tilde{\phi}_h|\tilde{\phi}_h\rangle|$ satisfies Eq.~(\ref{satWV}), that is, the three-tangle of
$|\tilde{\phi_h}\rangle_{A_1A_{j_1}A_{j_2}}$ in Eq.~(\ref{eq:ntanglepure}) is zero,
\begin{align}
\tau\left(\ket{\phi_h}_{A_1|A_{j_1}|A_{j_2}}\right)=&\tau\left(\ket{\phi_h}_{A_1|A_{j_1}A_{j_2}}\right)\nonumber\\
&-\tau\left(\rho_{A_1|A_{j_1}}\right)
-\tau\left(\rho_{A_1|A_{j_2}}\right)\nonumber\\
=&0,
\label{phi123zero}
\end{align}
for each $h$.

Eq.~(\ref{phi123zero}) implies that three-qubit pure state that arises in any pure state ensemble of $\rho_{A_1A_{j_1}A_{j_2}}$ in Eq.~(\ref{decomp123})
has zero as its three tangle value. Thus, from the definition of $n$-tangle for multi-qubit mixed state in Eq.~(\ref{2tangle}), we have
\begin{align}
\tau\left(\rho_{A_1|A_{j_1}|A_{j_2}}\right)=&\bigg[\min_{\{p_h, \ket{\phi_h}\}}\sum_h p_h \sqrt{\tau\left(\ket{\phi_h}_{A_1|A_{j_1}|A_{j_2}}\right)}\bigg]^2\nonumber\\
=&0,
\label{taurho123}
\end{align}
for any the three-qubit reduced density matrix $\rho_{A_1A_{j_1}A_{j_2}}$ of $\ket{\psi}_{A_1 A_2 ... A_n}$.

We now assume the induction hypothesis for Eq.~(\ref{ktanglezero}); for any $(m-1)$-qubit reduced density matrix
$\rho_{A_1A_{j_1}A_{j_2}\cdots A_{j_{m-2}}}$ of the generalized W-class states in Eq.~(\ref{supWV}), we assume its $(m-1)$ tangle is zero,
\begin{align}
\tau\left(\rho_{A_1|A_{j_1}|A_{j_2}|\cdots|A_{j_{m-2}}}\right)=0,
\label{induct}
\end{align}
and show its validity for $m\leq n-1$.

For any index vector $\vec{j}=(j_1, j_2, \ldots, j_{m-1})$ with $\{j_1,~j_2, \ldots, j_{m-1}\}\subseteq\{2,3,\cdots,n\}$,
Lemma~\ref{reduced} assures that the $m$-qubit reduced density matrix of $\ket{\psi}_{A_1 A_2 ... A_n}$ on subsystems $A_1A_{j_1}\cdots A_{j_{m-1}}$ is
\begin{align}
\rho_{A_1A_{j_1}\cdots A_{j_{m-1}}}=&\ket{\tilde{x}}_{A_1A_{j_1}\cdots A_{j_{m-1}}}\bra{\tilde{x}}\nonumber\\
&+\ket{\tilde{y}}_{A_1A_{j_1}\cdots A_{j_{m-1}}}\bra{\tilde{y}},
\label{mrhom}
\end{align}
where
$\ket{\tilde{x}}_{A_1A_{j_1}\cdots A_{j_{m-1}}}$ and $\ket{\tilde{y}}_{A_1A_{j_1}\cdots A_{j_{m-1}}}$ are the $m$-qubit unnormalized states in Eq.~(\ref{xym}).
By HJW theorem in Proposition~\ref{HJWthm}, we note that any pure state decomposition
\begin{equation}
\rho_{A_1A_{j_1}\cdots A_{j_{m-1}}}=\sum_{h=1}^{r}|\tilde{\phi_h}\rangle_{A_1A_{j_1}\cdots A_{j_{m-1}}} \langle\tilde{\phi_h}|,
\label{decompn}
\end{equation}
is related with Eq.~(\ref{mrhom}) by some
$r\times r$ unitary matrix $(u_{hl})$ such that
\begin{align}
|\tilde{\phi_h}\rangle_{A_1A_{j_1}\cdots A_{j_{m-1}}}=&u_{h1}\ket{\tilde{x}}_{A_1A_{j_1}\cdots A_{j_{m-1}}}\nonumber\\
&+u_{h2}\ket{\tilde{y}}_{A_1A_{j_1}\cdots A_{j_{m-1}}},
\label{phihm}
\end{align}
for each $h$. Furthermore, the normalization $\ket{\phi_h}_{A_1A_{j_1}\cdots A_{j_{m-1}}}=|\tilde{\phi}_h\rangle_{A_1A_{j_1}\cdots A_{j_{m-1}}}/\sqrt{p_h}$
with $ p_h =|\langle\tilde{\phi}_h|\tilde{\phi}_h\rangle|$ is a superposition of a $m$-qubit generalized W-class state and vacuum, which is again a generalized
W-class state.

From the definition of pure state tangle in Eq.~(\ref{eq:ntanglepure}), the $m$ tangle of each $m$-qubit pure state
$\ket{\phi_h}_{A_1A_{j_1}\cdots A_{j_{m-1}}}$ is
\begin{widetext}
\begin{align}
\tau\left(\ket{\phi_h}_{A_1|A_{j_1}|\cdots|A_{j_{m-1}}}\right)
=&\tau\left(\ket{\phi_h}_{A_1|A_{j_1}\cdots A_{j_{m-1}}}\right)-\sum_{k=2}^{m-1} \sum_{\vec{i}^k}\tau\left(\rho^h_{A_1|A_{i_1}|\cdots |A_{i_{k-1}}}\right)^{k/2},
\label{mtanglepure1}
\end{align}
\end{widetext}
where $\rho^h_{A_1A_{i_1}\cdots A_{i_{k-1}}}$ is the reduced density matrix of $\ket{\phi_h}_{A_1A_{j_1}\cdots A_{j_{m-1}}}$ on $k$-qubit subsystems
${A_1A_{i_1}\cdots A_{i_{k-1}}}$ with the index vector $\vec{i}^k=(i_1, i_2, \cdots, i_{k-1})$ for $\{i_1,~i_2, \cdots, i_{k-1}\} \subseteq \{j_1, j_2,\cdots,j_{m-1}\}$.
Let us further divide the last term of the inequality into the summation of two tangles and the others;
\begin{widetext}
\begin{align}
\tau\left(\ket{\phi_h}_{A_1|A_{j_1}|\cdots|A_{j_{m-1}}}\right)
=&\tau\left(\ket{\phi_h}_{A_1|A_{j_1}\cdots A_{j_{m-1}}}\right)-\sum_{l=1}^{m-1} \tau\left(\rho^h_{A_1|A_{j_l}} \right) -\sum_{k=3}^{m-1} \sum_{\vec{i}^k}\tau\left(\rho^h_{A_1|A_{i_1}|\cdots |A_{i_{k-1}}}\right)^{k/2}.
\label{mtanglepure2}
\end{align}
\end{widetext}

For each $k=3, \cdots ,m-1$, $\rho_{A_1A_{i_1}\cdots A_{i_{k-1}}}$ in the last summation of Eq.~(\ref{mtanglepure2}) is a $k$-qubit reduced density matrix of
the generalized W-class state $\ket{\phi_h}_{A_1A_{j_1}\cdots A_{j_{m-1}}}$, therefore the induction hypothesis assures that its $k$ tangle is zero;
\begin{equation}
\tau\left(\rho^h_{A_1|A_{i_1}|\cdots |A_{i_{k-1}}}\right)=0,
\label{ktau0}
\end{equation}
for each $k=3,\cdots ,m-1$ and index vector $\vec{i}^k=(i_1, i_2, \cdots, i_{k-1})$.
Furthermore, Lemma~\ref{Lem: satWV} implies that the usual monogamy inequality in terms of one and two tangles is saturated by
$\ket{\phi_h}_{A_1A_{j_1}\cdots A_{j_{m-1}}}$;
\begin{align}
\tau\left(\ket{\phi_h}_{A_1|A_{j_1}\cdots A_{j_{m-1}}}\right)=\sum_{l=1}^{m-1}\tau\left(\rho^h_{A_1|A_{j_l}} \right),
\label{satphi}
\end{align}
for each $h$.

Eq.~(\ref{ktau0}) together with Eq.~(\ref{satphi}) imply that
\begin{align}
\tau\left(\ket{\phi_h}_{A_1|A_{j_1}|\cdots|A_{j_{m-1}}}\right)=0
\label{tphi0}
\end{align}
for each $\ket{\phi_h}_{A_1A_{j_1}\cdots A_{j_{m-1}}}$ that arises in the decomposition of $\rho_{A_1A_{j_1}\cdots A_{j_{m-1}}}$,
\begin{align}
\rho_{A_1A_{j_1}\cdots A_{j_{m-1}}}=&\sum_{h=1}^{r}|\tilde{\phi_h}\rangle_{A_1A_{j_1}\cdots A_{j_{m-1}}} \langle\tilde{\phi_h}|\nonumber\\
=&\sum_{h=1}^{r}p_h\ket{\phi_h}_{A_1A_{j_1}\cdots A_{j_{m-1}}}\bra{\phi_h}.
\label{decompn2}
\end{align}
Thus, from the definition of $n$-tangle for multi-qubit mixed state in Eq.~(\ref{2tangle}), we have
\begin{widetext}
\begin{align}
\tau\left(\rho_{A_1|A_{j_1}|\cdots|A_{j_{m-1}}} \right)=&\bigg[\min_{\{p_h, \ket{\phi_h}\}}\sum_h p_h \sqrt{\tau\left(\ket{\phi_h}_{A_1|A_{j_1}|\cdots|A_{j_{m-1}}}\right)}\bigg]^2
=0,
\label{taurhom}
\end{align}
\end{widetext}
for any the $m$-qubit reduced density matrix $\rho_{A_1A_{j_1}\cdots A_{j_{m-1}}}$ of $\ket{\psi}_{A_1 A_2 ... A_n}$ with $3\leq m \leq n-1$.
Now Eq.~(\ref{taurhom}) together with Lemma~\ref{Lem: satWV}, we have the following theorem showing the saturation of multi-qubit SM inequality
by generalized W-class states.
\begin{Thm}
The strong monogamy inequality of multi-qubit entanglement is saturated by the generalized W-class states;
\begin{align}
\tau\left(\ket{\psi}_{A_1|A_2\cdots A_n}\right)=\sum_{m=2}^{n-1} \sum_{\vec{j}^m}\tau\left(\rho_{A_1|A_{j^m_1}|\cdots |A_{j^m_{m-1}}}\right)^{m/2},
\label{eq:SMsat}
\end{align}
for any multi-qubit generalized W-class state in Eq.~(\ref{supWV}).
\label{thm: smono}
\end{Thm}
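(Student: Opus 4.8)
The plan is to obtain the equality in~(\ref{eq:SMsat}) by combining two ingredients that are, by this point, almost in hand: that the $n$-qubit CKW inequality is already saturated for this class of states (Lemma~\ref{Lem: satWV}), and that every residual $m$-tangle with $3 \leq m \leq n-1$ vanishes. Once both hold, the chain in~(\ref{compar}) collapses: the $m=2$ block $\sum_{j=2}^{n}\tau(\rho_{A_1|A_j})$ already equals the one tangle by Lemma~\ref{Lem: satWV}, while every higher block contributes zero, so the strong monogamy inequality~(\ref{eq:SM}) is forced to become an equality. Thus the whole task reduces to proving that the higher tangles are zero.

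I would establish the vanishing of the residual $m$-tangles by induction on $m$, treating the base case $m=3$ separately. For the base case, Lemma~\ref{reduced} writes any three-qubit reduced state $\rho_{A_1A_{j_1}A_{j_2}}$ as a mixture of a three-qubit generalized W-class state and vacuum; the HJW theorem (Proposition~\ref{HJWthm}) then guarantees that every pure state $\ket{\phi_h}$ appearing in an arbitrary decomposition is again a superposition of a W-class state with vacuum, hence itself a generalized W-class state. Applying Lemma~\ref{Lem: satWV} to each such $\ket{\phi_h}$ shows its three tangle is zero, so the convex-roof optimization in~(\ref{ntanglemix}) returns zero regardless of the decomposition chosen.

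For the inductive step I would assume all lower-order tangles vanish and run the same mechanism one level higher. The crucial observation is that the pure-state $m$-tangle of each $\ket{\phi_h}$ defined in~(\ref{eq:ntanglepure}) splits into its two-tangle block plus the higher-$k$ tangle terms for $3 \leq k \leq m-1$; the former cancels the one tangle by the CKW saturation of Lemma~\ref{Lem: satWV}, and each $\rho^h_{A_1A_{i_1}\cdots A_{i_{k-1}}}$ is again a reduced W-class state, so every higher term vanishes by the induction hypothesis. Hence each pure-state $m$-tangle is zero, and the convex roof forces $\tau(\rho_{A_1|A_{j^m_1}|\cdots|A_{j^m_{m-1}}})=0$ for all index vectors.

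The main obstacle is the convex-roof minimization buried in the definition~(\ref{ntanglemix}) of the mixed-state $m$-tangle, which is generically intractable since one must optimize over infinitely many decompositions, compounded here by the recursive nature of~(\ref{eq:ntanglepure}). What makes the argument close is a \emph{closure} property: the two-dimensional range of $\rho_{A_1A_{j_1}\cdots A_{j_{m-1}}}$, spanned by $\ket{\tilde{x}}$ and $\ket{\tilde{y}}$, is invariant under the HJW unitary freedom, so \emph{every} pure state in \emph{every} decomposition remains a generalized W-class state. This trivializes the optimization, because each nonnegative summand is individually zero, and lets the induction on $m$ proceed without ever facing a genuine minimization.
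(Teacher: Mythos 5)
Your proposal is correct and follows essentially the same route as the paper: Lemma~\ref{Lem: satWV} handles the $m=2$ block, and the vanishing of all higher $m$-tangles is proved by induction on $m$ with base case $m=3$, using Lemma~\ref{reduced} and the HJW theorem to show that every pure state in every decomposition of a reduced state is again a generalized W-class state, which trivializes each convex-roof minimization. The closure property you highlight---that the rank-two range spanned by $\ket{\tilde{x}}$ and $\ket{\tilde{y}}$ is preserved under the HJW unitary freedom---is precisely the mechanism the paper exploits.
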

\section{Conclusions}\label{Sec: Conclusion}

We have considered a large class of multi-qubit generalized W-class states,
and provided a strong evidence for SM inequality of multi-qubit entanglement.
Although providing an analytical proof of SM inequality for arbitrary multi-qubit states seems to be a formidable challenge
because there are numerous optimization processes arising in the recurrent definition of $n$-tangle, we have successfully resolved
this problem by investing the structural properties of W-class states, and analytically shown that strong monogamy inequality is saturated
by this class of states.

Our result characterizes the strongly monogamous nature of arbitrary multi-qubit W-class states.
Noting the importance of the study on multipartite
entanglement, our result can provide a rich reference for future
work on the study of entanglement in complex quantum systems.

\section*{Acknowledgments}
The author would like to appreciate G. Adesso, S. Lee, S. D. Martino and B. Regula for helpful discussions.
This research was supported by Basic Science Research Program through the National Research Foundation of Korea(NRF)
funded by the Ministry of Education, Science and Technology(2012R1A1A1012246).

\end{document}